\newtheorem{thm}{Theorem}
\newtheorem*{thm*}{Theorem}
\newtheorem*{lem*}{Lemma}
\newtheorem{lem}{Lemma}
\newtheorem{cor}{Corollary}
\newtheorem{pr}{Proposition}
\begin{document}
\title{On solvability of a partial integral equation in the
space ${L_2(\Omega\times\Omega)}$}
\author{\sc{Eshkabilov Yu.Kh.}\\
\it{National University of Uzbekistan}\\
e-mail: yusup62@rambler.ru}
\date{}
\maketitle
\begin{abstract}

In this paper we investigate solvability of a partial integral equation in the space $L_2(\Omega\times\Omega),$
where $\Omega=[a,b]^\nu.$ We define a determinant for the partial integral equation as a continuous function on
$\Omega$ and for a continuous kernels of the partial integral equation we give explicit description of the
solution.

\emph{Key words:} partial integral operator, partial integral equation, the Fredholm integral equation.

\emph{2000 MSC Subject Classification}: 45A05, 45B05, 45C05, 45P05
\end{abstract}

In the models of solid state physics [1] and also in the lattice field theory [2], there appear so called discrete
Schrodinger operators, which are lattice analogies of usual Schrodinger operators in continuous space. The study
of spectra of lattice Hamiltonians (that is discrete Schrodinger operators) is an important matter of mathematical
physics. Nevertheless, on studying spectral properties of discrete Schrodinger operators three appear partial
integral  equations in a Hilbert space of multi- variable functions [1,3]. Therefore, on the investigation of
spectra of Hamiltionians considered on a lattice, the study of a solvability problem for partial integral
equations in  $L_2$ is essential (and even interesting from the point of view of functional analysis).

A question on the existence of a solution of partial integral
equations for functions of two variables were considered in [4-8]
and others. In this paper we consider an integral equation on the
space of functions of two variables $L_2 (\Omega\times\Omega),$
where $\Omega=[a,b]^\nu\subset\mathbb R^\nu,$ with one partial
integral operator. We define a determinant for the partial
intergal equation (PIE) as a continuous function on $\Omega,$
which helps to obtain the classical Fredholm theorems for a PIE,
and for a continuous kernels of the PIE we give explicit
description of the solution.

Let ${\cal H}=L_2(\Omega\times\Omega)$  (${\cal H}_0=L_2(\Omega)$) be a Hilbert space of measurable and quadratic
integrable functions on $\Omega\times \Omega$ (on $\Omega$), where $\Omega=[a,b]^\nu.$ We denote by $\mu$ the
Lebesgue measure on $\Omega$ and define the measure $\widehat \mu$ on $\Omega\times\Omega$ by $\widehat
\mu=\mu\otimes\mu.$ In the space $\cal H$, we consider a partial integral operator (PIO) $T_1$ defined by
$$
T_1f=\int\limits_\Omega k(x,s,y)f(s,y)ds,\quad f\in{\cal H}
$$
where $k(x,s,y)\in L_2(\Omega^3).$ The function $k(x,s,y)$ is
called \emph{kernel} of the PIO $T_1.$

If there exists a number $M$ such that $$b(t)\le M \ \mbox{ for
almost all }\ t\in\Omega, \eqno (I)$$ then the operator $T_1$ is a
linear bounded operator on $\cal H$ and it is uniquely defined by
its kernel $k(x,s,y),$ where
$$
b(t)=\int\limits_\Omega \int\limits_\Omega |k(x,s,t)|^2dxds.
$$

A kernel $\overline{k(s,x,y)}$ corresponds to the adjoint operator $T_1^*,$ i.e.
$$
T_1^*f=\int\limits_\Omega\overline{k(s,x,y)}f(s,y)ds,\quad f\in
\cal H.
$$

Consider a family of operators $\{ K_\alpha\}_{\alpha\in \Omega}$
in ${\cal H}_0$ associated with $T_1$ by the following formula
$$
K_\alpha\varphi=\int\limits_\Omega k(x,s,\alpha)\varphi(s)ds,\quad
\varphi\in{\cal H}_0,
$$
where $k(x,s,y)$ is the kernel of $T_1$.

Further, if a set of integralabity in the integral is absent, then
we mean integralabity by the set $\Omega.$ First, we consider
certain properties of PIO $T_1$ with the kernel $k(x,s,y) \in
L_2(\Omega^3)$ satisfying the condition (I) and then we study
solvability of the PIE with the kernel $k(x,s,y)\in C(\Omega^3).$
\begin{lem} Let $f\in\cal H$ and $\varphi_y(x)=f(x,y),$  where $y\in\Omega$ is fixed. Then for an arbitrary
$\varepsilon>0,$ there exists a subset $\Omega_\varepsilon\subset
\Omega$ such that $\mu(\Omega_\varepsilon)\ge \mu(\Omega)-
\varepsilon$ and $\varphi_\alpha\in{\cal H}_0,$ $\alpha\in
\Omega_\varepsilon.$ Moreover, $\|\varphi_\alpha\|\le C,$ $\alpha
\in\Omega_\varepsilon$ for some $C>0.$
\end{lem}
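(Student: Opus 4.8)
The plan is to reduce the statement to a single application of Tonelli's theorem followed by a Chebyshev-type estimate. First I would fix a jointly measurable representative of $f$ and form the nonnegative function
$$
g(y)=\int\limits_\Omega |f(x,y)|^2\,dx,\qquad y\in\Omega .
$$
By Tonelli's theorem applied to $|f|^2\ge 0$ on $\Omega\times\Omega$, the function $g$ is measurable and
$$
\int\limits_\Omega g(y)\,dy=\int\limits_{\Omega\times\Omega}|f(x,y)|^2\,d\widehat\mu=\|f\|_{\cal H}^2<\infty,
$$
so $g\in L_1(\Omega)$. Note that whenever $g(y)<\infty$ the slice $\varphi_y$ lies in ${\cal H}_0$ and $\|\varphi_y\|^2=g(y)$.

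The second step is to extract from integrability a set of almost full measure on which $g$ is bounded. Given $\varepsilon>0$, I would put $M=\|f\|_{\cal H}^2/\varepsilon$ (the case $f=0$ being trivial) and apply Chebyshev's inequality,
$$
\mu\bigl(\{y:\,g(y)>M\}\bigr)\le \frac1M\int\limits_\Omega g(y)\,dy=\frac{\|f\|_{\cal H}^2}{M}=\varepsilon .
$$
Setting $\Omega_\varepsilon=\{y\in\Omega:\,g(y)\le M\}$, the complement has measure at most $\varepsilon$, whence $\mu(\Omega_\varepsilon)\ge\mu(\Omega)-\varepsilon$. Since $g\in L_1(\Omega)$ is finite almost everywhere, the null set $\{g=+\infty\}$ is automatically absorbed into $\{g>M\}$, so on $\Omega_\varepsilon$ one indeed has $g(\alpha)<\infty$.

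Finally, for every $\alpha\in\Omega_\varepsilon$ the slice $\varphi_\alpha$ belongs to ${\cal H}_0$ and $\|\varphi_\alpha\|=\sqrt{g(\alpha)}\le\sqrt M$, so the uniform bound holds with $C=\|f\|_{\cal H}/\sqrt\varepsilon$.

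I do not expect a genuine obstacle here: the argument is purely measure-theoretic, and the quantitative bound $C$ deteriorates (as $\varepsilon^{-1/2}$) precisely in the way one would predict from Chebyshev. The only point requiring a little care is the passage from the abstract element $f\in{\cal H}$, which is an equivalence class, to the pointwise slices $\varphi_y$; this is handled by fixing one measurable representative at the outset, after which Tonelli guarantees both that $\varphi_y\in{\cal H}_0$ for almost every $y$ and that the displayed formula for $g$ is valid for that representative.
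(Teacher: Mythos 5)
Your proof is correct and is essentially the paper's own argument in streamlined form: the paper's key estimate $\mu(B_n)\le (d-a_n)/n\le d/n$ for $B_n=\{y:\int|f(x,y)|^2dx\ge n\}$ is exactly the Chebyshev inequality you invoke, and the paper then passes to the limit in $n$ to pick a threshold $n_0$ where you simply choose the optimal level $M=\|f\|^2/\varepsilon$ directly. Your version has the small additional merits of making the Tonelli step and the choice of a measurable representative explicit and of giving the quantitative constant $C=\|f\|/\sqrt{\varepsilon}$ (correctly with the square root, where the paper's closing line ``$\|\varphi_\alpha\|\le C$ for all $C\ge n_0$'' should read $C\ge\sqrt{n_0}$).
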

\begin{proof} Let $f\in\cal H$ and $d=\| f\|^2\ne 0.$ Define two
sequences of measurable subsets in $\Omega$ by the following
equalities:
$$
A_n=\left\{ y: \int |f(x,y)|^2dx<n,\ y\in\Omega\right\},\quad n\in \mathbb N,
$$ $$
B_n=\left\{ y: \int |f(x,y)|^2dx\ge n,\ y\in\Omega\right\},\quad n\in \mathbb N.
$$
The sequences of subsets $\{ A_n\}$ and $\{ B_n\}$ hold the
following properties:\\
$1^o.$ $A_1\subset A_2\subset\ldots\subset A_n\subset \ldots$ and
$B_1\supset B_2\supset\ldots\supset B_n\supset \ldots;$\\
$2^o.$ $\lim\limits_{n\to\infty} A_n=\bigcup\limits_{n\in \mathbb N} A_n$ and  $\lim\limits_{n\to\infty} B_n=\bigcap\limits_{n\in \mathbb N} B_n;$  \\
$3^0.$ $\Omega=A_n\cup B_n$ and $A_n\cap B_n=\varnothing,$ $n\in
\mathbb N.$

Further, we define two bounded sequences of non-negative numbers $a_n$ and $b_n$ by
$$
a_n=\int\limits_{A_n}dy\int\limits_\Omega |f(x,y)|^2 dx\ \mbox{
and } \ b_n=\int\limits_{B_n}dy\int\limits_\Omega |f(x,y)|^2 dx.
$$

The sequences of numbers $a_n$ and $b_n$ have the properties:\\
$4^o.$ $a_n$ is increasing and $b_n$ is decreasing;\\
$5^o.$ $a_n+b_n=d,$ $n\in \mathbb N.$

>From the boundeness and monotonicity of the sequences $a_n$ and $b_n$ we infer that both of them have finite
limit. By the property $5^o$ and by  the construction of the set
$B_n$ we obtain that $d-a_n\ge 0,$ $n\in \mathbb N$ and $d\ge
a_n+n\mu(B_n),$ $n\in \mathbb N.$ Then $\mu(B_n)\le (d-a_n)/n,$
$n\in \mathbb N.$ Therefore $\lim\limits_{n \to
\infty}\mu(B_n)=0.$ By the property $3^o$ we have $\mu(A_n)=\mu(
\Omega)-\mu(B_n),$ $n\in \mathbb N.$ Hence, $\lim\limits_{n \to
\infty} \mu(A_n)=\mu(\Omega),$ i.e. for an arbitrary small
$\varepsilon>0$ there exists a number $n_0\in \mathbb N$ such that
$\mu(\Omega)- \varepsilon\le \mu(A_{n_0})\le \mu(\Omega)$ and
$0\le \mu(B_{n_0})< \varepsilon.$ Moreover, this means that
$$
\int|\varphi_\alpha(x)|^2dx=\int|f(x,\alpha)|^2dx< n_0,\quad
\alpha\in A_{n_0}.
$$

Then, for the set $\Omega_\varepsilon=A_{n_0}$ we have $\varphi _\alpha\in{\cal H}_0,$ $\alpha\in
\Omega_\varepsilon$ and $\| \varphi_\alpha\|\le C,$ $\alpha\in \Omega_\varepsilon$ for all $C\ge n_0.$
\end{proof}
\begin{cor} Let $f\in\cal H,$ $\| f\|=1$ and $\varphi_y(x)=f(x,y)
,$  where $y\in \Omega$ is fixed. Then there exists a measurable
subset $\Omega_0\subset \Omega$ such that, $\mu( \Omega_0)>0$ and
the family $\{\varphi_\alpha\}_{\alpha\in \Omega}$ of functions on
$\Omega$ has the following property: $\varphi_\alpha \in{\cal
H}_0,$ $\alpha\in \Omega_0$ and $0< \|\varphi_\alpha\| \le C,$
$\alpha\in \Omega_0$ for some $C>0.$
\end{cor}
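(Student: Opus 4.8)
The plan is to refine the set $\Omega_\varepsilon=A_{n_0}$ furnished by the Lemma so as to additionally secure the strict positivity $\|\varphi_\alpha\|>0$. The Lemma already provides, for every $\varepsilon>0$, a set $A_{n_0}$ with $\mu(A_{n_0})\ge\mu(\Omega)-\varepsilon$ on which $\varphi_\alpha\in{\cal H}_0$ and $\|\varphi_\alpha\|\le\sqrt{n_0}$; the only thing missing is a subset of positive measure on which $\varphi_\alpha$ does not vanish, and fusing these two facts is the entire content of the corollary.

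First I would introduce the nonnegative measurable function $g(y)=\int|f(x,y)|^2\,dx=\|\varphi_y\|^2$, defined at those $y$ where the integral is finite. By Tonelli's theorem $g$ is measurable on $\Omega$ and $\int_\Omega g(y)\,dy=\|f\|^2=1$. The decisive step, and the one that uses the hypothesis $\|f\|=1$, is to consider the set $E=\{\,y\in\Omega:\ g(y)>0\,\}=\{\,y:\ \|\varphi_y\|>0\,\}$ and to observe that $\mu(E)>0$: were $\mu(E)=0$, then $g=0$ almost everywhere, forcing $\int_\Omega g\,dy=0$, which contradicts $\int_\Omega g\,dy=1$.

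With $\mu(E)>0$ in hand, I would apply the Lemma with the calibrated choice $\varepsilon=\mu(E)/2>0$, yielding $A_{n_0}$ with $\mu(B_{n_0})=\mu(\Omega\setminus A_{n_0})<\mu(E)/2$. Setting $\Omega_0=E\cap A_{n_0}$, one has $\mu(\Omega_0)\ge\mu(E)-\mu(B_{n_0})>\mu(E)/2>0$. For $\alpha\in\Omega_0$, membership in $A_{n_0}$ gives $\varphi_\alpha\in{\cal H}_0$ together with $\|\varphi_\alpha\|\le C$ for $C=\sqrt{n_0}$, while membership in $E$ gives $\|\varphi_\alpha\|>0$; hence $0<\|\varphi_\alpha\|\le C$ on $\Omega_0$, as required. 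The only delicate point is guaranteeing that intersecting the non-vanishing set $E$ with the bounded-norm set $A_{n_0}$ still leaves positive measure, and this is precisely what the choice $\varepsilon=\mu(E)/2$ arranges, which is legitimate because $\mu(E)>0$ was established independently before invoking the Lemma.
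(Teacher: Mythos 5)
Your proof is correct, and it follows essentially the route the paper intends: the paper states this corollary without proof as a consequence of Lemma~1, and the only genuinely new point --- that $\|f\|=1$ forces the set $E=\{y:\|\varphi_y\|>0\}$ to have positive measure, so that choosing $\varepsilon=\mu(E)/2$ in the Lemma leaves $E\cap A_{n_0}$ of positive measure --- is exactly the point you identify and handle correctly. No gaps.
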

\begin{cor}  Let $f\in\cal H.$ Then there exists a decreasing
sequence $\{ \varepsilon_n\}_{n\in \mathbb N}$ of posity numbers
such that $\lim\limits_{n\to\infty}\varepsilon_n=0$ and

(a) for each $n\in \mathbb N$ there exists a measurable subset
$\Omega_n \subset \Omega$ with the propertie $\mu(\Omega_n)
>\mu(\Omega)-\varepsilon_n$ such that $\Omega_1\subset\Omega_2\subset\ldots \subset\Omega_n\subset\ldots$ and $\bigcup\limits_{n\in \mathbb N}\Omega_n =\Omega;$

(b) for each $n\in \mathbb N,$ $\varphi_\alpha^{(n)}\in {\cal
H}_0,$ $\alpha\in\Omega_n$ and there exists a positive number
$C_n$ such that $\|\varphi_\alpha^{(n)}\|\le C_n,$ $\forall
\alpha\in \Omega_n,$ where $\varphi_\alpha^{(n)}(x)=f(x,\alpha),$
$\alpha\in \Omega_n;$

(c) for any $n\in \mathbb N,$ the function
$$
f_n(x,y)=\left\{\begin{array}{cl} f(x,y), & \mbox{ if }\
(x,y)\in \Omega\times \Omega_n,\\ 0,& \mbox{ otherwise } \\
\end{array} \right.
$$
belongs to $\cal H$ and $\lim\limits_{n\to\infty}f_n(x,y)=f(x,y).$
\end{cor}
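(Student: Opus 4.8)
The plan is to reuse verbatim the sets $A_n$ and $B_n$ built in the proof of Lemma~1 and to observe that they already constitute the desired exhausting family, so that the corollary is essentially a reorganization of that proof. Recall $A_n=\{y\in\Omega:\int|f(x,y)|^2dx<n\}$ and $B_n=\Omega\setminus A_n$. First I would dispose of a null-set technicality needed to make the union equal $\Omega$ on the nose. By Tonelli's theorem the function $g(y)=\int_\Omega|f(x,y)|^2dx$ satisfies $\int_\Omega g(y)\,dy=\|f\|^2<\infty$, hence $g$ is finite off a set $N$ with $\mu(N)=0$. Since replacing the representative of $f$ by $0$ on the null set $\Omega\times N$ leaves $f$ unchanged as an element of $\cal H$, I may assume $g(y)<\infty$ for every $y\in\Omega$; then $N\subset A_1$ and $\bigcup_{n}A_n=\Omega$.

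Next I would set $\Omega_n=A_n$. Property $1^o$ gives the nesting $\Omega_1\subset\Omega_2\subset\ldots$, and the previous step gives $\bigcup_n\Omega_n=\Omega$. The proof of Lemma~1 already shows $\mu(B_n)\to 0$, and since $B_1\supset B_2\supset\ldots$ the sequence $\mu(B_n)$ is non-increasing. To produce the sequence $\{\varepsilon_n\}$ I would put $\varepsilon_n=\mu(B_n)+2^{-n}$: this is a strictly decreasing sequence of positive numbers with $\varepsilon_n\to 0$, and, using $\mu(A_n)=\mu(\Omega)-\mu(B_n)$ from property $3^o$, one checks $\mu(\Omega_n)=\mu(\Omega)-\mu(B_n)>\mu(\Omega)-\varepsilon_n$. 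This establishes (a).

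For (b) I would just read off the defining inequality of $A_n$: for $\alpha\in\Omega_n=A_n$ one has $\|\varphi_\alpha^{(n)}\|^2=\int|f(x,\alpha)|^2dx<n$, so $\varphi_\alpha^{(n)}\in{\cal H}_0$ and the uniform bound holds with $C_n=\sqrt{n}$, which is precisely the estimate obtained in Lemma~1. For (c) I would note that the function $f_n$ of the statement satisfies $|f_n|\le|f|$, whence $f_n\in\cal H$ with $\|f_n\|\le\|f\|$; and since the $\Omega_n$ increase to $\Omega$, for each fixed $(x,y)$ we eventually have $y\in\Omega_n$, so $f_n(x,y)\to f(x,y)$ pointwise (and, by dominated convergence, in the norm of $\cal H$ as well).

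I do not expect a genuine obstacle here, since the whole argument is bookkeeping built on the constructions and estimates already present in Lemma~1. The only points requiring a little care are the harmless modification of $f$ on a null set, needed so that $\bigcup_n\Omega_n$ equals $\Omega$ rather than merely almost all of it, and the simultaneous construction of a single $\{\varepsilon_n\}$ that is at once positive, strictly decreasing, vanishing, and strictly larger than $\mu(B_n)$; the choice $\varepsilon_n=\mu(B_n)+2^{-n}$ meets all four demands at once.
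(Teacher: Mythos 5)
Your proof is correct and follows exactly the route the paper intends: the corollary is stated without proof as a direct consequence of the construction in Lemma~1, and taking $\Omega_n=A_n$, $C_n=\sqrt{n}$, and $\varepsilon_n=\mu(B_n)+2^{-n}$ is precisely that bookkeeping. Your extra care with the null set where $\int|f(x,y)|^2dx=\infty$ (so that $\bigcup_n\Omega_n$ equals $\Omega$ exactly rather than up to measure zero) patches a detail the paper silently glosses over.
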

\begin{pr} The following two conditions are equivalent:

(i) A number $\lambda\in\mathbb C$ is an eigenvalue for the operator $T_1;$

(ii) A number $\lambda\in\mathbb C$ is an eigenvalue for operators
$\{K_\alpha\}_{ \alpha \in \Omega_0}$,  where $\Omega_0$ is some
subset of $\Omega$ such that $\mu(\Omega_0)>0.$
\end{pr}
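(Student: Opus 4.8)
The plan is to exploit the \emph{fiberwise} structure of $T_1$. Writing $\varphi_y(x)=f(x,y)$ for a fixed value $y$, one has the pointwise identity $(T_1f)(x,y)=\int_\Omega k(x,s,y)\varphi_y(s)\,ds=(K_y\varphi_y)(x)$, so $T_1$ acts on the fiber $\{y\}\times\Omega$ precisely by the operator $K_y$. Consequently the eigenvalue equation $T_1f=\lambda f$ in $\mathcal H$ ought to be equivalent, fiber by fiber, to $K_y\varphi_y=\lambda\varphi_y$ in $\mathcal H_0$. The two implications then amount to passing between the global picture (in $\mathcal H$) and the fiberwise picture (in $\mathcal H_0$), with the corollaries above supplying the necessary control on the fibers.

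For the implication (i)$\Rightarrow$(ii) I would proceed as follows. Let $f$ be an eigenfunction of $T_1$ for $\lambda$, normalized so that $\|f\|=1$. Since $T_1f\in\mathcal H$ and $T_1f=\lambda f$ as elements of $L_2(\Omega\times\Omega)$, Fubini's theorem gives that for almost every $y\in\Omega$ the fiber $\varphi_y$ lies in $\mathcal H_0$ and satisfies $K_y\varphi_y=\lambda\varphi_y$ in $\mathcal H_0$. By Corollary 1 there is a measurable set $\Omega_0\subset\Omega$ with $\mu(\Omega_0)>0$ on which $\varphi_\alpha\in\mathcal H_0$ and $0<\|\varphi_\alpha\|\le C$. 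Intersecting $\Omega_0$ with the full-measure set on which the fiber equation holds leaves a set of positive measure, and at each of its points $\varphi_\alpha$ is a \emph{nonzero} eigenfunction of $K_\alpha$ for $\lambda$; renaming this intersection $\Omega_0$ yields (ii).

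For the converse (ii)$\Rightarrow$(i) I would build a single eigenfunction out of the fiber eigenfunctions. For each $\alpha\in\Omega_0$ choose a unit eigenfunction $\psi_\alpha\in\mathcal H_0$ with $K_\alpha\psi_\alpha=\lambda\psi_\alpha$, and define $f(x,y)=\psi_y(x)$ for $y\in\Omega_0$ and $f(x,y)=0$ otherwise. Then $\|f\|^2=\int_{\Omega_0}\|\psi_\alpha\|^2\,d\alpha=\mu(\Omega_0)>0$, so $f\in\mathcal H\setminus\{0\}$, while $(T_1f)(x,y)=(K_y\psi_y)(x)=\lambda\psi_y(x)=\lambda f(x,y)$ for $y\in\Omega_0$ and both sides vanish off $\Omega_0$; hence $T_1f=\lambda f$ and $\lambda$ is an eigenvalue of $T_1$.

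The hard part is making the choice $\alpha\mapsto\psi_\alpha$ measurable, so that $(x,\alpha)\mapsto\psi_\alpha(x)$ is jointly measurable and $f$ is a genuine element of $L_2(\Omega\times\Omega)$ rather than a mere pointwise recipe. The set-valued map $\alpha\mapsto\{\psi\in\mathcal H_0:\|\psi\|=1,\ K_\alpha\psi=\lambda\psi\}$ has nonempty closed values on $\Omega_0$, and since each $K_\alpha$ is Hilbert--Schmidt (by condition (I), $b(\alpha)\le M$ a.e.) the eigenspaces for a fixed nonzero $\lambda$ are even finite dimensional. Because $\alpha\mapsto k(\cdot,\cdot,\alpha)$ is measurable into $L_2(\Omega^2)$ by Fubini (and continuous in Hilbert--Schmidt norm when $k$ is continuous in $\alpha$), the map $\alpha\mapsto K_\alpha$ is measurable, and a measurable selection $\psi_\alpha$ exists by the Kuratowski--Ryll-Nardzewski selection theorem. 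I expect that verifying the measurability hypotheses of the selection theorem, together with the resulting joint measurability of $f$, is the only genuinely delicate point; the Fubini steps, the norm bookkeeping, and the fiberwise algebra are all routine.
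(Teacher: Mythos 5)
Your proof of (i)$\Rightarrow$(ii) is essentially the paper's: both extract the fibers $\varphi_\alpha(x)=f_0(x,\alpha)$ of a normalized eigenfunction and invoke Corollary 1 to obtain a positive-measure set $\Omega_0$ on which $\varphi_\alpha$ is a nonzero element of ${\cal H}_0$; your extra Fubini step, intersecting $\Omega_0$ with the full-measure set of $y$ for which the fiber identity $K_y\varphi_y=\lambda\varphi_y$ actually holds in ${\cal H}_0$, is a refinement the paper silently omits but genuinely needs, since $T_1f_0=\lambda f_0$ is only an a.e.\ identity. The converse is where you diverge. The paper disposes of (ii)$\Rightarrow$(i) in one sentence by citing [9] (Kudaybergenov's work on $\nabla$-Fredholm operators in Banach--Kantorovich spaces), importing the existence of a global eigenfunction as a black box. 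You instead construct it directly: choose a unit eigenvector $\psi_\alpha$ in each fiber over $\Omega_0$, glue them into $f(x,y)=\psi_y(x)$ (zero off $\Omega_0$), and check $\|f\|^2=\mu(\Omega_0)>0$ and $T_1f=\lambda f$ fiberwise. You correctly isolate measurability of the selection $\alpha\mapsto\psi_\alpha$ as the one nontrivial point; since the graph $\{(\alpha,\psi):\|\psi\|=1,\ K_\alpha\psi=\lambda\psi\}$ is measurable (the defining function is measurable in $\alpha$ and continuous in $\psi$) and Lebesgue measure is complete, a measurable selection exists (von Neumann--Aumann suffices here; Kuratowski--Ryll-Nardzewski also works once weak measurability of the multifunction is verified), and a measurable ${\cal H}_0$-valued map admits a jointly measurable representative, so $f\in{\cal H}$. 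Your route is longer but self-contained and makes explicit exactly what the citation of [9] is doing for the paper; the paper's route is shorter but leaves that entire direction to an external reference.
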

\begin{proof} We start with the implication $(i)\Rightarrow (ii).$ Let
$\lambda\in\mathbb C$ be an eigenvalue of operator $T_1,$ i.e.
$T_1f_0=\lambda f_0$ for some $f_0\in{\cal H},$ $\| f_0\|=1.$ We
define $\varphi_\alpha=\varphi_\alpha(x)=f_0(x,\alpha),$ $\alpha
\in\Omega.$ Therefore, we have a family $\{\varphi_\alpha\}_{
\alpha\in\Omega}$ of functions on $\Omega.$ Then, by Corollary 1,
there exists a subset $\Omega_0\subset\Omega$ such that $\mu(
\Omega_0)>0$ and $\varphi_\alpha\in{\cal H}_0,$ $\alpha\in
\Omega_0,$ $\| \varphi_\alpha\|\ne 0,$ $\forall \alpha\in
\Omega_0.$ For an arbitrary $\alpha\in \Omega_0$ we have
$$
K_\alpha\varphi_\alpha=\int k(x,s,\alpha)\varphi_\alpha(s)ds= \int
k(x,s,\alpha)f_0(s,\alpha)ds=\lambda f_0(x,\alpha)=\lambda
\varphi_\alpha(x),
$$
i.e. the number $\lambda$ is an eigenvalue for $K_\alpha,$ $\alpha
\in \Omega_0.$

Now, we prove the implication $(ii)\Rightarrow (i).$ Suppose that
there exists a subset $\Omega_0$ in $\Omega$ with
$\mu(\Omega_0)>0$ and a number $\lambda\in\mathbb C$ is an
eigenvalue for operators $K_\alpha,$ $\alpha\in \Omega_0.$ Since
$K_\alpha$ is a compact operator for all $\alpha\in \Omega,$ then
there exists a function $f_0\in {\mathcal H}$, $f_0\ne 0$ [9] such
that $T_1f_0=\lambda f_0.$
\end{proof}
\begin{pr}
If $\lambda\in\mathbb C$ is an eigenvalue of the operator $T_1,$
then the number $\overline \lambda$ is an eigenvalue of the
operator $T_1^*.$
\end{pr}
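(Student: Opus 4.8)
The plan is to reduce the assertion to the corresponding statement for the associated family of compact operators $\{K_\alpha\}$ and then to invoke Proposition 1 twice. First I would record the elementary but crucial observation that the family of operators associated (in the sense of the defining formula for $K_\alpha$) with the adjoint $T_1^*$ is precisely $\{K_\alpha^*\}_{\alpha\in\Omega}$. Indeed, the kernel of $T_1^*$ is $\overline{k(s,x,y)}$, so the operator attached to $T_1^*$ at a point $\alpha$ sends $\varphi$ to $\int\overline{k(s,x,\alpha)}\varphi(s)\,ds$, which is exactly $K_\alpha^*$ computed in ${\cal H}_0$. Since each $K_\alpha$ is compact, each $K_\alpha^*$ is compact as well, so $T_1^*$ is again a PIO to which Proposition 1 applies.

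With this in hand the argument proceeds as follows. Suppose $\lambda$ is an eigenvalue of $T_1$. By Proposition 1 there is a subset $\Omega_0\subset\Omega$ with $\mu(\Omega_0)>0$ such that $\lambda$ is an eigenvalue of $K_\alpha$ for every $\alpha\in\Omega_0$. For a compact operator the nonzero part of the spectrum obeys the Riesz--Schauder (Fredholm) symmetry $\dim\ker(K_\alpha-\lambda I)=\dim\ker(K_\alpha^*-\overline\lambda I)$; in particular, for $\lambda\neq 0$, the number $\overline\lambda$ is an eigenvalue of $K_\alpha^*$ for every $\alpha\in\Omega_0$. Thus $\overline\lambda$ is an eigenvalue of each member of the family $\{K_\alpha^*\}$ on a set of positive measure, and applying Proposition 1 in the direction $(ii)\Rightarrow(i)$ to the PIO $T_1^*$ (whose associated family is exactly $\{K_\alpha^*\}$) yields that $\overline\lambda$ is an eigenvalue of $T_1^*$.

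The step I expect to be the genuine obstacle is the case $\lambda=0$, since the eigenvalue symmetry for compact operators holds only on the nonzero spectrum: a compact operator may have nontrivial kernel while its adjoint is injective, and since the fibers $K_\alpha$ can realize arbitrary Hilbert--Schmidt (e.g.\ weighted-shift type) kernels, the naive transfer $\ker K_\alpha\neq\{0\}\Rightarrow\ker K_\alpha^*\neq\{0\}$ can fail. I would therefore handle $\lambda=0$ separately, or restrict the statement to nonzero eigenvalues, which is the case of interest for the spectral analysis motivating the paper; one could attempt to use the identity $\ker T_1^*=(\operatorname{Ran}T_1)^{\perp}$, but because density of $\operatorname{Ran}T_1$ is not precluded by $\ker T_1\neq\{0\}$, this case appears to require an additional hypothesis rather than a routine argument. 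Apart from $\lambda=0$, the only points needing care are the measurability of $\Omega_0$ under passage to adjoints and the correct reading of Proposition 1 for $\{K_\alpha^*\}$, both of which are immediate once the identification of the associated family of $T_1^*$ above is established.
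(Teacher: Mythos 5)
Your proof follows the paper's argument exactly: pass to the fiber operators $K_\alpha$ via Proposition 1, use the Riesz--Schauder symmetry of compact operators to conclude that $\overline\lambda$ is an eigenvalue of each $K_\alpha^*$ on a set of positive measure, and apply Proposition 1 again to the adjoint PIO, whose associated family is $\{K_\alpha^*\}$. Your caveat about $\lambda=0$ is well taken --- the paper's own proof asserts the transfer from $K_\alpha$ to $K_\alpha^*$ without excluding $\lambda=0$, so the only reservation you raise is a gap the published argument shares.
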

\begin{proof}
Let  $\lambda\in\mathbb C$ be an eigenvalue of the operator $T_1.$
Then there exists a subset $\Omega_0 \subset \Omega,$
$\mu(\Omega_0)>0$ such that $\lambda$ is an eigenvalue of the
every compact operator $K_\alpha,$ $\alpha\in \Omega_0.$ Therefore
the number $\overline \lambda$ is an eigenvalue of every operator
$K_\alpha^*,$ $\alpha\in \Omega_0:$
$$
K_\alpha^*\varphi=\int\overline{k(s,x,\alpha)} \varphi(s) ds,
\quad \varphi\in {\mathcal H}_0.
$$
By Proposition 1, the number $\overline \lambda$ is an eigenvalue of the adjoint operator $T_1^*.$
\end{proof}
\begin{pr} Every eigenvalue of the operator $T_1$ has infinite
multiplicity. \end{pr}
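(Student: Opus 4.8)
The plan is to exploit the fact that $T_1$ integrates only over the first argument and therefore commutes with multiplication by any bounded function of the second variable $y$. Concretely, let $\lambda\in\mathbb C$ be an eigenvalue of $T_1$ and fix a nonzero eigenfunction $f_0\in{\cal H}$ with $T_1f_0=\lambda f_0$. For any bounded measurable function $g:\Omega\to\mathbb C$, set $f_g(x,y)=g(y)f_0(x,y)$. First I would check that $f_g\in{\cal H}$, which is immediate since $\|f_g\|\le\|g\|_\infty\|f_0\|<\infty$, and that $f_g$ is again an eigenfunction for the same $\lambda$: because $g(y)$ does not depend on the integration variable $s$, it factors out of the partial integral,
$$
T_1f_g(x,y)=g(y)\int k(x,s,y)f_0(s,y)\,ds=g(y)\,\lambda f_0(x,y)=\lambda f_g(x,y).
$$
Thus the $\lambda$-eigenspace of $T_1$ is invariant under every multiplication operator $M_g:h\mapsto g(y)h$, and the whole problem reduces to producing infinitely many linearly independent functions of the form $f_g$.

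To this end I would locate the set of $y$-slices on which $f_0$ does not vanish, namely $Q=\{\,y\in\Omega:f_0(\cdot,y)\not\equiv 0\ \mbox{in}\ {\cal H}_0\,\}$. Since $\widehat\mu=\mu\otimes\mu$ and $\|f_0\|\ne 0$, Fubini's theorem gives $\int_Q\big(\int|f_0(x,y)|^2\,dx\big)\,dy=\|f_0\|^2>0$, so $\mu(Q)>0$. Because the Lebesgue measure on $\Omega=[a,b]^\nu$ is non-atomic, I can partition $Q$ into countably many pairwise disjoint measurable subsets $E_1,E_2,\ldots$ each of positive measure, and take $g_m=\chi_{E_m}$. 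The resulting eigenfunctions $f_m:=f_{g_m}=\chi_{E_m}(y)f_0(x,y)$ have pairwise disjoint supports in the $y$-variable, hence are mutually orthogonal in ${\cal H}$; and each is nonzero since $\int_{E_m}\|f_0(\cdot,y)\|^2\,dy>0$ by the definition of $Q$. An infinite orthogonal family of eigenfunctions shows that the $\lambda$-eigenspace is infinite-dimensional.

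The argument is essentially structural and I do not expect a serious obstacle; the only point requiring care is ensuring that the eigenspace genuinely spreads over a positive-measure set of $y$-slices, so that the partition into infinitely many positive-measure pieces is possible. This is exactly what the computation $\mu(Q)>0$ guarantees, ruling out the degenerate scenario in which $f_0$ might appear concentrated on a single slice (impossible in any case, since a single slice is $\widehat\mu$-null). Alternatively, one could replace the characteristic functions $\chi_{E_m}$ by monomials in the coordinates of $y$ and deduce linear independence from the fact that a nonzero polynomial cannot vanish on the positive-measure set $Q$; the disjoint-support version is cleaner because orthogonality makes linear independence automatic.
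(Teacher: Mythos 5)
Your proof is correct and follows essentially the same route as the paper: the paper also multiplies the eigenfunction $f_0$ by arbitrary bounded measurable functions $b(y)$ to produce the subspace $L_0=\{b(y)f_0(x,y)\}$ inside the eigenspace. Your partition of the positive-measure set $Q$ into countably many disjoint positive-measure pieces simply makes explicit the infinite-dimensionality claim that the paper asserts without detail.
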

\begin{proof} Let $\lambda\in\mathbb C$ be an eigenvalue of
$T_1.$ Hence, there exists an element $f_0\in{\cal H},$ $\| f_0 \|
=1$ such that $T_1f_0=\lambda f_0.$ We consider a subspace $L_0
\subset{\cal H}:$ $L_0=\{\widetilde f\in{\cal H}: \widetilde f
(x,y)=b(y)f_0(x,y),$ where $b=b(y)$ is an arbitrary bounded
measurable function on $\Omega\}.$ For every $\widetilde f \in
L_0$ we have $T_1\widetilde f=\lambda\widetilde f,$ i.e. $L_0
\subset M_\lambda,$ where $M_\lambda$ is the eigen-subspace
corresponding to $\lambda.$ But, the subspace $L_0$ is infinite
dimensional, therefore, $M_\lambda$ is also infinite dimensional
subspace of $\cal H.$
\end{proof}

Now we consider the equation
\begin{equation}
f-\varkappa T_1f=g_0,
\end{equation}
in the space $\cal H$, where $f$ is an unknown function from $\cal H$, $g_0\in\cal H$ is given (known) function,
$\varkappa\in\mathbb C$ is a parameter of the equation, $T_1$ is PIO with a kernel $k(x,s,y)$ continuous on
$\Omega^3.$

It is clear that, if $k(x,s,y)\in C(\Omega^3)$  then for all
$\alpha\in \Omega$ the integral operators $K_\alpha$ on ${\cal
H}_0$ are compact. For each $\alpha\in\Omega$ we denote by
$\Delta_\alpha^{( 1)}(\varkappa)$ and
$M_\alpha^{(1)}(x,s;\varkappa),$ respectively, the Fredholm
determinant and the Fredholm minor of the operator $E- \varkappa
K_\alpha,$ $\varkappa\in\mathbb C$ [10], where $E$ is the identity
operator in ${\cal H}_0.$ According to the continuity of the
kernel $k(x,s,y)$ and uniform convergence of the series for
$\Delta_\alpha^{( 1)}(\varkappa)$ and $M_\alpha^{(1)}(x,s;
\varkappa)$ for every $\varkappa\in\mathbb C$ we obtain [10] that
the function $D_1(y)=D_1(y;\varkappa)$ on $\Omega$ and the
function $M_1(x,s,y)=M_1(x,s,y;\varkappa)$ on $\Omega^3,$ which
are given respectively by the equalities
$$
D_1(y;\varkappa)=\Delta_y^{( 1)}(\varkappa),\,\  y\in \Omega\ \mbox{ and } \ M_1(x,
s,y;\varkappa)=M_y^{(1)}(x,s;\varkappa),\,\ y\in\Omega,
$$
are continuous functions on $\Omega$ and $\Omega^3$ for every
$\varkappa\in\mathbb C.$

The continuous function $D_1(y)=D_1(y;\varkappa)$ ($M_1(x,s,y)= M_1( x,s,y;\varkappa)$) is called {\it a
determinant} ({\it a minor}) of the operator $E-\varkappa T_1,$ $\varkappa\in\mathbb C.$

{\bf Definition 1.} If for a number $\varkappa_0\in\mathbb C$ $\ D_1 (y;\varkappa_0)\ne 0$ for all $y\in \Omega,$
then $\varkappa_0$ is called {\it a regular number} of the PIE (1). A set of all regular numbers of the PIE (1) is
denoted by ${\cal R}_{T_1}.$

{\bf Definition 2.} If for a number $\varkappa_0\in\mathbb C$
there exists a point $y_0\in \Omega$ such that $D_1(y_0;
\varkappa_0)=0,$ then $\varkappa_0$ is called {\it a singular
number} of the PIE (1). A set of all singular numbers of the the
PIE (1) is denoted by ${\cal S}_{T_1}.$

{\bf Definition 3.} If for a number $\varkappa_0\in\mathbb C$
there exists a measurable subset $\Omega_0\subset \Omega$ with
$\mu(\Omega_0)>0$ such that $D_1(y;\varkappa_0)=0,$ $\forall y\in
\Omega_0,$ then $\varkappa_0$ is called {\it a characteristic
number} of the the PIE (1). A set of all characteristic numbers of
the PIE (1) is denoted by ${\cal X}_{T_1}.$

{\bf Definition 4.} A number $\varkappa_0\in\mathbb C$ is called {\it an essential number} of the PIE (1) if
$\varkappa_0\in {\cal S} _{T_1}\setminus{\cal X}_{T_1}.$ A set of all essential numbers of the PIE (1) is denoted
by ${\cal E}_{T_1}.$

Thus, for a parameter $\varkappa$ of the PIE (1), there exist subsets ${\cal R}_{T_1},$ ${\cal S}_{T_1},$ ${\cal
X}_{T_1},$ and ${\cal E}_{T_1}$ in $\mathbb C,$ which have the following relations:

(i) ${\cal R}_{T_1}\cup {\cal S}_{T_1}=\mathbb C$ and ${\cal
R}_{T_1}\cap {\cal S}_{T_1}=\varnothing;$

(ii) ${\cal X}_{T_1}\cup {\cal E}_{T_1}={\cal S}_{T_1}$ and ${\cal
X}_{T_1}\cap {\cal E}_{T_1}=\varnothing.$

>From Definitions 1, 2, 3 and 4 one gets that for an arbitrary
non-zero PIO $T_1$ sets ${\cal R}_{T_1}$ and ${\cal E}_{T_1}$ are
non-empty, but ${\cal X}_{T_1}$ may be empty. For example,
consider a PIE in the space $L_2([0,1]^2):$
$$
f(x,y)-\varkappa\int\limits_0^1e^{x-s}e^yf(s,y)ds=g_0(x,y),
$$
where $f$ is an unknown function in $L_2([0,1]^2),$ $g_0\in L_2(
[0,1]^2)$ is an arbitrary given function. For this PIE, the
determinant has a simple form $D_1(y;\varkappa)=1-\varkappa e^y,$
$y\in [0,1].$ Therefore ${\cal S}_{T_1} =\left[  e^{-1},1\right]$
and ${\cal X}_{T_1}=\varnothing.$

>From Proposition 1 and Definition 3 it follows
\begin{thm} A number $\lambda\in\mathbb C,$ $\lambda\ne 0,$ is an eigenvalue of
the operator $T_1$ if and only if $\lambda^{-1}\in {\cal X}_{T_1} .$ \end{thm}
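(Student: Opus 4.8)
The plan is to establish the equivalence by translating the eigenvalue condition for $T_1$ into the language of the fiber operators $K_\alpha$ (via Proposition 1) and then into the vanishing of the Fredholm determinant $D_1(y;\varkappa)$ (via classical Fredholm theory applied fiberwise). First I would recall that $\lambda\neq 0$ is an eigenvalue of $T_1$ precisely when $\varkappa=\lambda^{-1}$ makes the homogeneous equation $f-\varkappa T_1 f=0$ have a nonzero solution. For a fixed compact operator $K_\alpha$ on ${\cal H}_0$, the classical Fredholm alternative tells us that $\lambda$ is an eigenvalue of $K_\alpha$ if and only if the Fredholm determinant $\Delta^{(1)}_\alpha(\varkappa)=D_1(\alpha;\varkappa)$ vanishes at $\varkappa=\lambda^{-1}$. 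So the whole statement reduces to matching "$\lambda$ is an eigenvalue of $K_\alpha$ on a set of positive measure" with "$D_1(\cdot;\lambda^{-1})$ vanishes on a set of positive measure," the latter being exactly the definition of $\lambda^{-1}\in{\cal X}_{T_1}$.

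For the forward direction, suppose $\lambda\neq 0$ is an eigenvalue of $T_1$. By Proposition 1 (the implication $(i)\Rightarrow(ii)$), there is a subset $\Omega_0\subset\Omega$ with $\mu(\Omega_0)>0$ such that $\lambda$ is an eigenvalue of $K_\alpha$ for every $\alpha\in\Omega_0$. By the Fredholm theory for the compact operator $K_\alpha$, the number $\lambda$ being an eigenvalue of $K_\alpha$ is equivalent to $\Delta^{(1)}_\alpha(\lambda^{-1})=0$, i.e. $D_1(\alpha;\lambda^{-1})=0$. Hence $D_1(\alpha;\lambda^{-1})=0$ for all $\alpha\in\Omega_0$, which by Definition 3 means exactly that $\lambda^{-1}\in{\cal X}_{T_1}$.

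For the reverse direction, suppose $\lambda^{-1}\in{\cal X}_{T_1}$. By Definition 3 there is a measurable $\Omega_0\subset\Omega$ with $\mu(\Omega_0)>0$ such that $D_1(\alpha;\lambda^{-1})=0$ for all $\alpha\in\Omega_0$. Again by Fredholm theory, $D_1(\alpha;\lambda^{-1})=\Delta^{(1)}_\alpha(\lambda^{-1})=0$ forces $\lambda$ to be an eigenvalue of the compact operator $K_\alpha$ for each such $\alpha$. Thus $\lambda$ is a common eigenvalue of $\{K_\alpha\}_{\alpha\in\Omega_0}$ on a set of positive measure, and the implication $(ii)\Rightarrow(i)$ of Proposition 1 yields that $\lambda$ is an eigenvalue of $T_1$.

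The main obstacle, and the one point deserving care, is the rigorous use of the Fredholm determinant's characterization of eigenvalues at the level of each fiber $\alpha$, together with its measurable (indeed continuous) dependence on $\alpha$. The paper has already secured the continuity of $y\mapsto D_1(y;\varkappa)$, so the sets on which the determinant vanishes are measurable and the positive-measure bookkeeping in Definition 3 is legitimate; the remaining subtlety is simply to invoke correctly the classical fact that for a fixed compact $K_\alpha$ the nonvanishing of $\Delta^{(1)}_\alpha(\varkappa)$ is equivalent to the invertibility of $E-\varkappa K_\alpha$, so that $\lambda^{-1}$ being a zero of the determinant is equivalent to $\lambda$ being an eigenvalue of $K_\alpha$. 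Once this fiberwise equivalence is in hand, the theorem follows immediately by combining it with Proposition 1 in both directions.
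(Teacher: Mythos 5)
Your proposal is correct and follows exactly the route the paper intends: the paper's entire proof is the remark that the theorem ``follows from Proposition 1 and Definition 3,'' and the bridge you supply --- that for each fixed $\alpha$ the classical Fredholm theory identifies $\lambda\neq 0$ being an eigenvalue of the compact operator $K_\alpha$ with the vanishing $\Delta^{(1)}_\alpha(\lambda^{-1})=D_1(\alpha;\lambda^{-1})=0$ --- is precisely the step the paper leaves implicit. You have simply written out in full the argument the author compresses into one sentence.
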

\begin{thm} a) if $\varkappa_0\in {\cal E}_{T_1},$ then $\overline{
\varkappa_0}\in {\cal E}_{T^*_1};$

b) if $\varkappa_0\in {\cal X}_{T_1},$ then $\overline{
\varkappa_0} \in {\cal X}_{T_1^*}.$
\end{thm}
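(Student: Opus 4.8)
The plan is to reduce both statements to a single identity relating the determinant of $E-\varkappa T_1^*$ to that of $E-\varkappa T_1$. Write $D_1(y;\varkappa)$ for the determinant of $E-\varkappa T_1$ and $D_1^*(y;\varkappa)$ for the determinant of $E-\varkappa T_1^*$. Recall from the proof of Proposition 2 that the operator $K_y^*$ associated with $T_1^*$ has kernel $\overline{k(s,x,y)}$ and is exactly the Hilbert-space adjoint of $K_y$ for each fixed $y\in\Omega$. The key claim is that
$$
D_1^*(y;\varkappa)=\overline{D_1(y;\overline{\varkappa})},\qquad y\in\Omega,\ \varkappa\in\mathbb C .
$$

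To establish this identity I would work directly with the Fredholm series for the determinant. For a compact integral operator with kernel $\mathcal K(x,s)$ the Fredholm determinant of $E-\varkappa K$ is $\sum_{n\ge 0}\frac{(-\varkappa)^n}{n!}\int_{\Omega^n}\det\big(\mathcal K(t_i,t_j)\big)_{i,j=1}^n\,dt_1\cdots dt_n$. For $K_y^*$ the kernel is $\overline{k(s,x,y)}$, so the $n\times n$ matrix under the integral is the conjugate transpose of $\big(k(t_i,t_j,y)\big)_{i,j}$; since the determinant of a conjugate transpose is the conjugate of the determinant, each integrand for $D_1^*$ is the complex conjugate of the corresponding integrand for $D_1$. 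Conjugating the whole series and absorbing the conjugation into the powers of $\varkappa$ yields the claim. The uniform convergence of these series, already invoked in the excerpt to guarantee continuity of $D_1$, justifies the termwise manipulation.

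Granting the identity, both parts follow at once. Setting $\varkappa=\overline{\varkappa_0}$ gives $D_1^*(y;\overline{\varkappa_0})=\overline{D_1(y;\varkappa_0)}$, so the zero sets $\{y:D_1^*(y;\overline{\varkappa_0})=0\}$ and $\{y:D_1(y;\varkappa_0)=0\}$ coincide. For (b), if $\varkappa_0\in\mathcal X_{T_1}$ there is a set $\Omega_0$ with $\mu(\Omega_0)>0$ on which $D_1(\cdot;\varkappa_0)$ vanishes; by the coincidence of zero sets $D_1^*(\cdot;\overline{\varkappa_0})$ vanishes on the same $\Omega_0$, so $\overline{\varkappa_0}\in\mathcal X_{T_1^*}$. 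For (a), if $\varkappa_0\in\mathcal E_{T_1}=\mathcal S_{T_1}\setminus\mathcal X_{T_1}$, pick $y_0$ with $D_1(y_0;\varkappa_0)=0$; then $D_1^*(y_0;\overline{\varkappa_0})=0$, whence $\overline{\varkappa_0}\in\mathcal S_{T_1^*}$. Moreover $\varkappa_0\notin\mathcal X_{T_1}$ means the (closed, by continuity) zero set of $D_1(\cdot;\varkappa_0)$ carries no subset of positive measure; as this set equals the zero set of $D_1^*(\cdot;\overline{\varkappa_0})$, we get $\overline{\varkappa_0}\notin\mathcal X_{T_1^*}$, hence $\overline{\varkappa_0}\in\mathcal E_{T_1^*}$.

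The only substantive point is the determinant identity; everything after it is bookkeeping with the definitions. I expect the main obstacle to be handling the Fredholm series cleanly, in particular justifying that the conjugate-transpose relation of the kernels passes to the determinants under the integral sign and that termwise conjugation of the series is legitimate. An alternative route for part (b) alone is to combine Theorem 2 with Proposition 2: since $0$ is automatically regular (as $D_1(y;0)=1$), any $\varkappa_0\in\mathcal X_{T_1}$ is nonzero, so $\varkappa_0^{-1}$ is an eigenvalue of $T_1$, and then $(\overline{\varkappa_0})^{-1}=\overline{\varkappa_0^{-1}}$ is an eigenvalue of $T_1^*$, giving $\overline{\varkappa_0}\in\mathcal X_{T_1^*}$. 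This does not reach the essential numbers of part (a), however, so the determinant identity is the more economical tool overall.
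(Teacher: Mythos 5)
Your proof is correct and follows essentially the same route as the paper: both hinge on the identity $D_1^*(y;\overline{\varkappa_0})=\overline{D_1(y;\varkappa_0)}$ relating the determinant of $E-\overline{\varkappa_0}T_1^*$ to that of $E-\varkappa_0 T_1$, and then translate the zero-set conditions defining ${\cal E}$ and ${\cal X}$. The only difference is that the paper merely cites this identity as ``a property of the determinant,'' whereas you sketch its derivation from the Fredholm series via the conjugate-transpose relation of the kernels --- a worthwhile addition, but not a different argument.
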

\begin{proof} Let $\varkappa_0\in {\cal E}_{T_1}.$ Then there
exists a point $y_0\in\Omega$ with $D_1(y_0;\varkappa_0)=0$ and we
have $\mu\{ y\in \Omega: D_1(y;\varkappa_0)=0\}=0.$ But using a
property of the determinant $D_1(y;\varkappa_0)$ we obtain that
$\overline{D_1(y;\varkappa_0)}= {\widetilde D_1(y; \overline
\varkappa_0)},$ where $\widetilde D_1(y;\overline \varkappa_0)$ is
a determinant of the operator $E-\overline \varkappa_0 T_1^*.$
Therefore, we have $\widetilde D_1(y_0; \overline \varkappa_0)=0$
and $\mu\left\{ y\in \Omega: \widetilde D_1(y;\overline
\varkappa_0)=0\right\}=0,$ i.e. the number $\overline \varkappa_0$
is an essential number of the adjoint equation $f-\overline
\varkappa_0 T_1^*f =g_0,$ and the proof of property a) is
complete. The proof of the property b) can be proceeded
analogously. \end{proof}
\begin{thm} If $\varkappa_0\in {\cal R}_{T_1}$ then for every $g_0
\in\cal H$ the PIE (1) has a unique solution on $\cal H$ and it is of the form $f=g_0+ \varkappa_0Bg_0$, where an
operator $B=B(\varkappa_0)$ acts in $\cal H$ by the formula
\begin{equation}
Bg=\int\frac{M_1(x,s,y;\varkappa_0)}{D_1(y;\varkappa_0)} g(s,y)ds,
\quad g\in{\cal H},
\end{equation}
but the corresponding homogeneous equation $f-\varkappa_0T_1f=0$ has only trivial solution (zero solution). Here
$D_1(y; \varkappa_0)$ and $M_1(x,s,y;\varkappa_0)$ are the determinant and the minor of the operator
$E-\varkappa_0T_1,$ respectively.
\end{thm}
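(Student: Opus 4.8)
The plan is to reduce the PIE (1) to a parameter-dependent family of ordinary Fredholm integral equations of the second kind and then to solve each member of the family by the classical Fredholm formulas. First I would fix $y\in\Omega$ and set $\varphi_y(x)=f(x,y)$, $\psi_y(x)=g_0(x,y)$. Then equation (1) takes the form
$$
\varphi_y-\varkappa_0 K_y\varphi_y=\psi_y,
$$
that is, $(E-\varkappa_0 K_y)\varphi_y=\psi_y$ in ${\cal H}_0$, an ordinary Fredholm equation of the second kind in the variable $x$ with $y$ entering only as a parameter. By Corollary 2, for almost every $y\in\Omega$ we have $\psi_y\in{\cal H}_0$, so each such equation is genuinely posed in ${\cal H}_0$.

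Next I would invoke the classical Fredholm theory [10] for the compact operators $K_y$. Since $\varkappa_0\in{\cal R}_{T_1}$, Definition 1 gives $D_1(y;\varkappa_0)=\Delta_y^{(1)}(\varkappa_0)\ne 0$ for every $y\in\Omega$. Hence for every $y$ the operator $E-\varkappa_0 K_y$ is invertible, and the unique solution of $(E-\varkappa_0 K_y)\varphi_y=\psi_y$ is furnished by the Fredholm resolvent formula
$$
\varphi_y(x)=\psi_y(x)+\varkappa_0\int\frac{M_y^{(1)}(x,s;\varkappa_0)}{\Delta_y^{(1)}(\varkappa_0)}\,\psi_y(s)\,ds.
$$
Rewriting this in the original variables and using $D_1(y;\varkappa_0)=\Delta_y^{(1)}(\varkappa_0)$ and $M_1(x,s,y;\varkappa_0)=M_y^{(1)}(x,s;\varkappa_0)$ yields precisely $f=g_0+\varkappa_0 Bg_0$ with $B$ given by (2). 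Before calling this a solution in ${\cal H}$, I would check that $B$ is bounded: the functions $M_1$ and $D_1$ are continuous on $\Omega^3$ and on $\Omega$ respectively, and $D_1(y;\varkappa_0)\ne 0$ on the compact set $\Omega$, so the kernel $M_1(x,s,y;\varkappa_0)/D_1(y;\varkappa_0)$ is continuous, hence bounded, on $\Omega^3$. Thus $B$ is a PIO whose kernel satisfies condition (I), so $B$ is bounded on ${\cal H}$ and $f=g_0+\varkappa_0 Bg_0\in{\cal H}$. That this $f$ actually solves (1) is then built into the resolvent formula: the fibrewise identity $(E-\varkappa_0 K_y)\varphi_y=\psi_y$, valid for almost every $y$, says exactly that $f-\varkappa_0 T_1f=g_0$ in ${\cal H}$.

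For uniqueness and for the homogeneous equation I would use Theorem 1 together with the inclusion relations (i)--(ii). Since $\varkappa_0\in{\cal R}_{T_1}$, while ${\cal X}_{T_1}\subset{\cal S}_{T_1}$ and ${\cal R}_{T_1}\cap{\cal S}_{T_1}=\varnothing$, we obtain $\varkappa_0\notin{\cal X}_{T_1}$. By Theorem 1 the number $\varkappa_0^{-1}$ is not an eigenvalue of $T_1$; equivalently, $T_1f=\varkappa_0^{-1}f$ has no nonzero solution, so the homogeneous equation $f-\varkappa_0 T_1f=0$ admits only the trivial solution, and by linearity the solution of (1) is unique.

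The hard part will be the passage from the pointwise-in-$y$ solvability to the global statement in $L_2(\Omega\times\Omega)$: one must guarantee that assembling the fibrewise solutions $\varphi_y$ produces a jointly measurable, square-integrable function of $(x,y)$, and that the formal manipulations with the Fredholm series are legitimate. This is exactly what the continuity of $M_1$ and $D_1$ on the compact cube $\Omega^3$, the uniform convergence of the defining series (noted just before Definition 1), and the non-vanishing of $D_1$ on $\Omega$ supply, via the boundedness of $B$ established above; once $B$ is known to be a bounded PIO with continuous kernel, joint measurability and the $L_2$-bound come for free.
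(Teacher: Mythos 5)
Your proof is correct, and its core is the same as the paper's: reduce (1) fibrewise to the Fredholm equations $(E-\varkappa_0K_y)\varphi_y=\psi_y$, solve each by the classical resolvent formula with $\Delta_y^{(1)}$ and $M_y^{(1)}$, and settle uniqueness by noting $\varkappa_0\notin{\cal X}_{T_1}$ and invoking Theorem 1. Where you differ is in how the fibrewise solutions are assembled into an element of $\cal H$. The paper routes this through Corollary 2: it truncates $g_0$ to the subspaces $L_2^{(n)}$ built on an exhaustion $\Omega_n\uparrow\Omega$, solves the truncated equations there, and recovers $f$ as the limit $\lim_n\widetilde f_n$, only afterwards identifying the limit with $(E+\varkappa_0B)g_0$. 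You instead observe directly that since $D_1(\cdot;\varkappa_0)$ is continuous and nonvanishing on the compact set $\Omega$ and $M_1$ is continuous on $\Omega^3$, the kernel $M_1/D_1$ is bounded, so $B$ is a PIO satisfying condition (I) and hence bounded on $\cal H$; this makes $f=g_0+\varkappa_0Bg_0$ an element of $\cal H$ at once and renders the truncation scheme unnecessary. Your shortcut is legitimate precisely because $\varkappa_0$ is regular (for essential numbers, as in Theorem 4, the denominator vanishes somewhere and one genuinely needs a condition like (II)); it yields a cleaner argument, while the paper's exhaustion machinery is the more robust template that its later proofs reuse. One trivial point you leave implicit: the uniqueness step via ``$\varkappa_0^{-1}$ is not an eigenvalue'' presupposes $\varkappa_0\ne 0$, the case $\varkappa_0=0$ being immediate since (1) then reads $f=g_0$; the paper makes the same assumption explicitly.
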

\begin{proof} Let $\varkappa_0\in{\cal R}_{T_1}$ and $\varkappa_0
\ne 0.$ First, we prove that PIE (1) is solvable in $\cal H.$ By
Corollary 2, for the function $g_0$ there exists a decreasing
sequence of non-negative numbers $\varepsilon_n$ and a sequence of
increasing measurable subsets $\Omega_n\subset \Omega,$ which
satisfy the properties (a), (b) and (c) with $\lim\limits_{n\to
\infty}\varepsilon_n=0.$ For every $\Omega_n$ we define a subspace
$L_2^{(n)}=L_2^{(n)}(\Omega\times\Omega)$ as follows: a function
$\widetilde f\in {\cal H}$ belongs to the subspace $L_2^{(n)},$ if
it satisfies the following conditions:

(i) $\varphi_\alpha^{(n)}(x)=\widetilde f(x,\alpha)\in{\cal H}_0,$
$\forall \alpha\in \Omega_n;$

(ii) there exists a positive number $C_n$ such that $\| \varphi
_\alpha^{(n)}\|\le C_n,$ $\forall \alpha\in \Omega_n;$

(iii) $\widetilde f (x,y)=0$ if $(x,y)\in \Omega\times(\Omega \setminus \Omega_n).$

For every $f\in\cal H,$ there exists a sequence $f_n\in L_2^{(n)}
,$ $n\in \mathbb N,$ such that $\lim\limits_{n\to\infty} f_n=f.$
Therefore, first we  find a solution of the equation (1) in the
space $L_2^{(n)}$ and we can find a solution of the equation (1)
in the space $\cal H$ as the limit
$f(x,y)=\lim\limits_{n\to\infty} \widetilde f_n(x,y),$ where
$\widetilde f_n$ are solutions of the equation (1) in the space
$L_2^{(n)}.$ Thus, the equation (1) in $L_2^{(n)}$ reduces to the
following one:
\begin{equation}
\widetilde f_n(x,y)-\varkappa_0T_1\widetilde f_n(x,y)=g_n(x,y),
\end{equation}
where $g_n$ is an element of $L_2^{(n)}$ corresponding to the
function $g_0(x,y).$

Hence, by the property (b) of Corollary 2, for each fixed $y\in\Omega,$ the equation (3) reduces to the following
second type Fredholm integral equation in ${\cal H}_0:$
$$
\varphi_\alpha^{(n)}(x)-\varkappa_0 K_\alpha \varphi_\alpha^{(n)}
(x)=h_\alpha^{(n)}(x),\quad \alpha\in\Omega \eqno (3')
$$
where $\varphi_\alpha^{(n)}(x)=\widetilde f_n(x,\alpha)$ is an
unknown function in ${\cal H}_0,$ $h_\alpha^{(n)}(x)=g_n(x,
\alpha)$ is a given function in ${\cal H}_0.$

By the first fundamental Fredholm theorem, the equation $(3')$ for
every $\alpha\in\Omega_n$ has the only solution
$$
\varphi_\alpha^{(n)}=\varphi_\alpha^{(n)}(x)= h_\alpha^{(n)}(x)+
\varkappa_0 B_\alpha h_\alpha^{(n)}(x), \
$$
where the operator $B_\alpha=B_\alpha(\varkappa_0)$ acts  in
${\cal H}_0$ by the formula
$$
B_\alpha\varphi=\int\frac{M_\alpha^{(1)}(x,s;\varkappa_0)}{ \Delta
_\alpha^{(1)}(\varkappa_0)} \varphi(s)ds,(\alpha \in \Omega_n)
$$
and $B_\alpha$ is compact. Here $\Delta
_\alpha^{(1)}(\varkappa_0)$ and $M_\alpha ^{(1)}(x,s;\varkappa_0)$
are the Fredholm determinant and the Fredholm minor of the
operator $E-\varkappa_0K_\alpha,$ respectively.

It is clear, that if $\alpha\in\Omega\setminus \Omega_n$ then the
equation (3') has the solution $\varphi_\alpha^{(n)}(x) =0.$
Hence, the function $\widetilde f_n(x,y)=\varphi_y^{(n)}(x)$
belongs to the subspace $L_2^{(n)}$ and it is a solution of the
equation (3), where $\varphi_{\alpha}^{(n)}(x), \ \varphi \in
\Omega$ the solutions of the equation (3'). We define the function
$f_0\in{\cal H}$ by the equality
$f_0(x,y)=(E+\varkappa_0B)g_0(x,y),$ where the operator
$B=B(\varkappa_0)$ acts in $\cal H$ by the formula (2) and it is a
bounded operator. But, if $y\in\Omega_n$ then we have
$$
f_0(x,y)=g_0(x,y)+ \varkappa_0Bg_0(x,y) =g_n(x,y)+ \varkappa_0Bg_n(x,y)=
$$ $$
=h_y^{(n)}(x)+\varkappa_0B_yh_y^{(n)}(x)=\varphi_y^{(n)}(x)=\widetilde f_n(x,y),
$$
and for every $y\in\Omega\setminus\Omega_n$ we have
$f_0(x,y)=\varphi_y^{(n)}(x)= 0.$ Thus, by the property (c) of
Corollary 2 we obtain $f_0(x,y)=\lim\limits_{n\to\infty}\widetilde
f_n(x,y).$ Therefore the function
$f(x,y)=f_0(x,y)=(E+\varkappa_0B)g_0(x,y)$ is a solution of the
equation (1).

Thus, we have proved that the equation (1) is solvable. Now we
prove uniqueness of the solution of the equation (1). Suppose,
$f_1\in \cal H$ and $f_2\in\cal H$ are solutions of the equation
(1), where $f_1\ne f_2.$ Then, for the function $\widehat f=
f_1-f_2\ne 0$ we have $\widehat f-\varkappa_0T_1\widehat f=0,$
i.e. the homogeneous equation $f-\varkappa_0T_1f= 0$ has a
solution $\widehat f\ne 0.$ Hence, the number $\varkappa_0^{-1}$
is an eigenvalue of $T_1,$ then by Theorem 1 we obtain that
$\varkappa_0 \in{\cal X}_{T_1}.$ But this is impossible since
$\varkappa_0\in{\cal R}_{ T_1}.$

Using Proposition 1 we can show that for $\varkappa_0\in{\cal
R}_{T_1}$ the homogeneous equation $f-\varkappa_0 T_1f= 0$ has
only trivial solution. The proof is complete.\end{proof}

\begin{thm} Let $\varkappa_0\in{\cal E}_{T_1}.$ If the free term $g_0$ of the PIE (1) satisfies the
condition
$$
\int\frac{\int |g_0(s,y)|^2ds}{|D_1(y;\varkappa_0)|^2}dy<\infty, \eqno (II)
$$
then PIE (1) has a unique solution on $\cal H$ and it has a form $f=g_0+\varkappa_0Bg_0\in{\cal H},$ but
corresponding homogeneous equation $f-\varkappa_0T_1f=0$ has only trivial solution, where the operator $B$ is
given by (2).\end{thm}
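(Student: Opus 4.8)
The plan is to follow the proof of Theorem 4 almost verbatim, solving the equation fibrewise in the variable $y$ and then assembling the fibres into a single element of $\cal H$. The essential new difficulty is that the determinant $D_1(y;\varkappa_0)$ now vanishes; but, since $\varkappa_0\notin{\cal X}_{T_1}$, it vanishes only on a null set, so the resolvent operator $B$ of (2) is still defined almost everywhere, and condition (II) is precisely what guarantees that $Bg_0$ is square-integrable.

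First I would record that, because $\varkappa_0\in{\cal E}_{T_1}={\cal S}_{T_1}\setminus{\cal X}_{T_1}$, the set $N=\{y\in\Omega:\ D_1(y;\varkappa_0)=0\}$ satisfies $\mu(N)=0$. Hence for almost every $y\in\Omega$ one has $D_1(y;\varkappa_0)=\Delta_y^{(1)}(\varkappa_0)\ne 0$, so $\varkappa_0$ is not a characteristic value of the compact operator $K_y$. For each such $y$ the first fundamental Fredholm theorem applied to the equation
$$
\varphi_y(x)-\varkappa_0K_y\varphi_y(x)=g_0(x,y)
$$
produces the unique solution
$$
\varphi_y(x)=g_0(x,y)+\varkappa_0\int\frac{M_1(x,s,y;\varkappa_0)}{D_1(y;\varkappa_0)}g_0(s,y)\,ds,
$$
since the Fredholm resolvent kernel of $E-\varkappa_0K_y$ is exactly $M_y^{(1)}(x,s;\varkappa_0)/\Delta_y^{(1)}(\varkappa_0)=M_1(x,s,y;\varkappa_0)/D_1(y;\varkappa_0)$. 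Gluing these fibres yields the candidate $f(x,y)=g_0(x,y)+\varkappa_0(Bg_0)(x,y)$, with $B$ as in (2), defined for almost every $(x,y)$.

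The main step, and the only place where (II) is used, is to verify that $f\in{\cal H}$. Since $k\in C(\Omega^3)$, the minor $M_1(x,s,y;\varkappa_0)$ is continuous and hence bounded on the compact set $\Omega^3$, say $|M_1|\le M_0$. Applying the Cauchy--Schwarz inequality in $s$ to (2) and then integrating in $x$, I expect the fibrewise estimate
$$
\int|(Bg_0)(x,y)|^2dx\le\frac{M_0^2\,\mu(\Omega)^2}{|D_1(y;\varkappa_0)|^2}\int|g_0(s,y)|^2ds;
$$
integrating this in $y$ bounds $\|Bg_0\|^2$ by $M_0^2\mu(\Omega)^2$ times the integral in (II), which is finite by hypothesis. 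Thus $Bg_0\in{\cal H}$, and since $g_0\in{\cal H}$ we get $f\in{\cal H}$. Because $k\in C(\Omega^3)$ satisfies condition (I), the operator $T_1$ is bounded, so $T_1f\in{\cal H}$; as for almost every $y$ the fibre $\varphi_y=f(\cdot,y)$ solves the displayed Fredholm equation, the identity $f-\varkappa_0T_1f=g_0$ holds for almost every $(x,y)$, hence in $\cal H$.

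Finally, for uniqueness I would invoke Theorem 1 together with $\varkappa_0\notin{\cal X}_{T_1}$. Any two solutions differ by a solution $\widehat f$ of the homogeneous equation $\widehat f-\varkappa_0T_1\widehat f=0$; if $\widehat f\ne 0$, then $\varkappa_0^{-1}$ is an eigenvalue of $T_1$, which by Theorem 1 forces $\varkappa_0\in{\cal X}_{T_1}$, contradicting $\varkappa_0\in{\cal E}_{T_1}$. Hence $\widehat f=0$, the solution is unique, and the homogeneous equation has only the trivial solution. The one point requiring care beyond the integrability estimate is the joint measurability of the glued function $f$ and the matching of the fibrewise Fredholm solutions with $g_0+\varkappa_0Bg_0$; both follow from the continuity of $M_1$ and $D_1$ in $y$ and the a.e.\ non-vanishing of $D_1$, exactly as in the proof of Theorem 4.
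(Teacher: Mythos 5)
Your proposal is correct and follows essentially the same route as the paper: define $f_0=g_0+\varkappa_0Bg_0$, note that the zero set of $D_1(\cdot;\varkappa_0)$ is null since $\varkappa_0\notin{\cal X}_{T_1}$, verify the equation fibrewise via the Fredholm resolvent, and use the boundedness of $M_1$ on $\Omega^3$ together with the Cauchy--Schwarz inequality and condition (II) to conclude $Bg_0\in{\cal H}$, with uniqueness from Theorem 1. Your fibrewise $L_2$ estimate and the final constant $N_0^2\mu(\Omega)^2$ match the paper's computation exactly.
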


\begin{proof} Let $\varkappa_0\in{\cal E}_{T_1}.$ Put $\Omega'= \{ y\in \Omega: D_1(y;\varkappa_0)=0\}.$ It is
evident that $\Omega'\ne\varnothing$ and $\mu(\Omega')=0.$
However, for every $y\in\Omega\setminus \Omega'$ the function
$f_0(x,y)=g_0(x,y)+\varkappa_0Bg_0(x,y)$ satisfies the equation
(1). Now it is enough to show that $f_0 \in {\cal H}.$ Suppose
that $g_0$ satisfies the condition (II). We have
$$
\int\int |Bg_0(x,y)|^2dxdy=\int\int\left|\int\frac{M_1(x,s,y;\varkappa_0)}{D_1(y;\varkappa_0)} g_0(s,y)ds
\right|^2dxdy\le
$$ $$
\le\int\int\left(\frac{\int|M_1(x,s,y;\varkappa_0)|\cdot |g_0(s,y)|ds}{|D_1(y;\varkappa_0)|} \right)^2 dxdy \le
$$ $$
\le N_0^2 \int\int \frac{\left(\int|g_0(s,y)|ds\right)^2}{|D_1(y;\varkappa_0)|^2}dxdy\le
$$ $$
\le N_0^2\mu(\Omega) \int \frac{\left(\int |g_0(s,y)|ds\right)^2}{|D_1(y;\varkappa_0)|^2}dy,
$$
where $N_0=\max\limits_{x,s,y\in\Omega} |M_1(x,s,y;\varkappa_0)|.$

But for the function $g_0(x,y)$ from the Cauchy-Schwartz
inequality for almost all $y\in \Omega$ we have
$$
\int|g_0(s,y)|ds\le\sqrt{\mu(\Omega)}\cdot\sqrt{\int|g_0(s,y)|^2ds}.
$$
Hence, we obtain
$$
\int\int|Bg_0(x,y)|^2dxdy\le \left( N_0\cdot\mu(\Omega)\right)^2\cdot\int \frac{\int |g_0(s,y)|^2ds}{| D_1(y;
\varkappa_0)|^2}dy<\infty,
$$
i.e. $Bg_0\in\cal H,$ therefore $f_0=g_0+\varkappa_0Bg_0\in \cal H$ and $f_0$ is a solution of the equation (1).

Uniqueness of the solution follows from Theorem 1. Using
Proposition 1 one can also show that the homogeneous equation
$f-\varkappa_0T_1f=0$ has only the trivial solution.
\end{proof}

{\bf Remark 1.} The condition (II) in Theorem 4 is natural.

For example, for the equation
\begin{equation}
f(x,y)-\varkappa \int\limits_0^1e^{x-s}yf(s,y)ds=e^xy^{1/2}
\end{equation}
in the space $L_2([0,1]^2),$ we have $D_1(y;\varkappa)=1-\varkappa
y,$ $y\in [0,1]$ and $M_1(x,s,y;\varkappa)= e^{x-s}y.$ Hence,
${\cal S}_{T_1}={\cal E}_{T_1}=[1,\infty).$ For each
$\varkappa\not\in [1,\infty),$ the equation (4) has the solution
\begin{equation}
f_0(x,y)=\frac{e^xy^{1/2}}{1-\varkappa y}\in L_2([0,1]^2).
\end{equation}
If $\varkappa_0\in[1,\infty),$ then the function (10) is a
continuous function on the set $\Omega'=[0,1]\times
([0,1]\setminus  \{ 1/\varkappa_0\})$ with
$\widehat\mu(\Omega')=\widehat \mu([0,1]\times[0,1])$ and for
every $y\in [0,1]\setminus\{1/\varkappa_0\}$ the function (10)
satisfies the equation (4), but $f_0\not\in L_2([0,1]^2).$

{\bf Remark 2.} Let $k(x,s,y)\in C(\Omega^3).$ Then, in the case of $\varkappa_0\in{\cal E}_{T_1},$ the set of all
functions $g\in \cal H$ (see Theorem 4), which satisfies the inequality
$$
\int\frac{\int|g(s,y)|^2ds}{|D_1(y;\varkappa_0)|^2}dy<\infty,
$$
is infinite dimensional subspace in $\cal H$.

\end{document}